\documentclass[11pt, twoside]{article}
\usepackage[margin=1in]{geometry}

\usepackage[utf8]{inputenc} 
\usepackage[T1]{fontenc}    
\usepackage{hyperref}       
\usepackage{booktabs}       
\usepackage{amsfonts}       
\usepackage{nicefrac}       
\usepackage{microtype}      
\usepackage{color}
\usepackage{mathtools}
\usepackage{amsthm,amsmath,amssymb,graphicx,url}%
\usepackage{epstopdf}
\usepackage{enumitem}
\setlist{nosep,after=\vspace{\baselineskip}}
\usepackage{algorithm,algpseudocode}
\DeclareMathOperator*{\argmax}{arg\,max}

\usepackage{dsfont}
\usepackage{authblk}

\newtheorem{definition}{Definition}
\newtheorem{lemma}{Lemma}

\newtheorem{theorem}{Theorem}

\def\ci{\perp\!\!\!\perp}

\def\NP{\textsc{NP}}
\newcommand{\mat}[1]{\mathbf{#1}}

\usepackage{setspace}	

\pagenumbering{gobble}

 \begin{document}

\title{Entropic Causality and \\ Greedy Minimum Entropy Coupling}
\author[1,*]{Murat Kocaoglu}
\author[1,\textdagger]{Alexandros G. Dimakis}
\author[1,\textdaggerdbl]{Sriram Vishwanath}
\author[2,\textsection]{Babak Hassibi}
\affil[1]{\small Department of Electrical and Computer Engineering, The University of Texas at Austin, USA}
\affil[2]{\small Department of Electrical Engineering, California Institute of Technology, USA}
\affil[ ]{\small \textit \textsuperscript{*} mkocaoglu@utexas.edu \textsuperscript{\textdagger}dimakis@austin.utexas.edu  \textsuperscript{\textdaggerdbl}sriram@ece.utexas.edu \textsuperscript{\textsection}hassibi@systems.caltech.edu}
\renewcommand\Authands{ and }

\maketitle
\begin{abstract}
We study the problem of identifying the causal relationship between two discrete random variables
from observational data. 
We recently proposed a novel framework called \textit{entropic causality} that 
works in a very general functional model but makes the assumption that the unobserved exogenous variable 
has small entropy in the true causal direction. 

This framework requires the solution of a \textit{minimum entropy coupling problem:} 
Given marginal distributions of $m$ discrete random variables, each on $n$ states, find the joint distribution 
with minimum entropy, that respects the given marginals. 
This corresponds to minimizing a concave function of $n^m$ variables
over a convex polytope defined by $n\, m$ linear constraints, called a transportation polytope. 
Unfortunately, it was recently shown that this minimum entropy coupling problem is \NP-hard, even for 2 variables with $n$ states. Even representing points (joint distributions) over this space can require exponential complexity (in $n,m$) if done naively. 

In our recent work we introduced an efficient greedy algorithm to find an approximate solution for this problem. 
In this paper we analyze this algorithm and establish two results: that our algorithm always finds a local minimum and also is within an additive approximation error from the unknown global optimum.
\end{abstract}

\section{Introduction}
Causality is of interest to statisticians, philosophers, engineers and medical scientists \cite{Chalupka2017,Wentrup2016, Ramsey2010}. Understanding the causal relations between observable parameters is important in analyzing the workings of a system, as well as predicting how it will behave after a policy change. 
Causality has been studied under several frameworks including potential outcomes~\cite{Rubin1974} and structural equation modeling~\cite{Pearl2009}. In this paper we rely on structure equation models and data-driven causality 
using information theory. 

The use of information theoretic tools for causal discovery is recently gaining increasing attention
through various approaches: For example, Janzing et al.~\cite{Janzing2012} propose an information geometry approach that relies on a cause and mechanism independence assumption. Another line of work focuses on 
time-series data and uses Granger causality and 
directed information~\cite{Granger1969,Etesami2016,Quinn2015,Kontoyiannis2016}.
In this paper we also use information measures but rely on a different framework 
that we recently proposed~\cite{Kocaoglu2017}.

Our framework, called entropic causality~\cite{Kocaoglu2017} is data-driven, i.e., it can estimate causal directions between two discrete random variables without interventions. Our approach uses R\'{e}nyi entropy as a complexity 
measure and considers \textit{the simpler model more likely to be the true causal direction}. 
In~\cite{Kocaoglu2017} we showed that finding the simplest causal model that explains an observed joint distribution requires solving a \textit{minimum entropy coupling problem}: Given marginal distributions of $m$ discrete random variables, each on $n$ states, find the joint distribution with minimum entropy, that respects the given marginals. This corresponds to minimizing a concave function of $n^m$ variables over a convex polytope defined by $n\, m$ linear constraints, called a transportation polytope \cite{Loera2013}. 

The minimum entropy coupling problem between two variables was shown to be NP-hard in \cite{Kovacevic2012}. In \cite{Kocaoglu2017}, we proposed a greedy algorithm for the minimum entropy coupling problem and showed that for two variables, it always finds a local optimum. The proof used a characterization of the KKT conditions of the corresponding optimization problem and a characterization of the algorithm output when there are two variables. 
However, this characterization cannot be used when there are more variables. 

In this work, we extend the result in \cite{Kocaoglu2017}: We develop a new characterization of the algorithm output for any number of variables. This characterization allows us to conclude that the algorithm output satisfies the KKT conditions irrespective of the number of variables, which implies that the algorithm returns a local optimum. Moreover, we show an additive approximation guarantee with respect to the global optimum. 

In Section \ref{sec:relatedwork}, we provide a very short overview of the causal inference literature. In Section \ref{sec:Background}, we summarize the results of \cite{Kocaoglu2017} and explain how minimum entropy coupling  arises in the entropic causal inference framework. In Section \ref{sec:LocalOptimum}, we identify the conditions necessary for a solution to be a local optimum and show that our algorithm's output always satisfies these conditions by deriving a new characterization. In Section \ref{sec:Approximation}, we develop our approximation guarantee for a variant of this algorithm, which is easier to analyze. 

\section{ Related Work}
\label{sec:relatedwork}

Causal relationships between random variables can be represented by causal directed graphical models~\cite{Pearl2009, Spirtes2001}. Pearl's framework led to a complete graph theoretic characterization of which parts of a causal graph are learnable using statistical tests. Efficient algorithms were developed for this learning task by Spirtes et al. \cite{Spirtes2001}. 
Unfortunately, a general causal graph cannot be uniquely identified from data samples.

A complete solution to the causal graph identification problem requires experiments, also called interventions.  An intervention forces the value of a variable without affecting the other system variables. This removes the effect of its causes, effectively creating a new causal graph. These changes in the causal graph create a post-interventional distribution among variables, which can be used to learn additional causal relations in the original graph. The procedure can be applied repeatedly to fully identify any causal graph \cite{Shanmugam2015}. There is significant progress recently on how to efficiently perform experiments \cite{EberhardtThesis,Shanmugam2015}, even under constraints \cite{Kocaoglu2016}. Unfortunately, in many cases it is very difficult (or even impossible) to perform experiments and we are only given a static dataset.

When performing experiments is not an option, to identify the causal relations between the variables we need additional assumptions on the data generating process. The most widely employed assumption is the additive noise assumption, which asserts that the unobserved variables affect the observable variables additively. Under this assumption, authors in \cite{Hoyer2008} showed that, except for a measure zero parameter set, one can identify the true causal direction between two variables, as long as the relation is non-linear. A similar result is known when the noise is non-Gaussian, irrespective of the relation between the variables \cite{Shimizu2006}. These approaches inherently assume continuous variables and additive noise. Other works consider discrete variables with the additive noise \cite{Peters2011}, or continuous variables without the additive noise assumption \cite{Mooij2010}. 

Another approach is to exploit the postulate that the cause and mechanism are in general independently assigned by nature. The notion of \emph{independence} here is captured by assigning maps, or conditional distributions to random variables to argue about independence of cause and mechanism. In this direction an information-geometry based approach is suggested \cite{Janzing2012}. Independence of cause and mechanism is captured by treating the log-slope of the function as a random variable, and assuming that it is independent from the cause. In the case of a deterministic relation $Y=f(X)$, there are theoretical guarantees on identifiability. However, this assumption is restrictive for real data.

In \cite{Kocaoglu2017}, we introduced the entropic causality framework. Our framework does not assume additive noise
and uses probability distributions as opposed to variable values. Thus, it can naturally handle
both categorical as well as ordinal variables. 
The central postulate is that in the true direction, the R\'{e}nyi entropy of the exogenous variable is small. 
The central theoretical result of~\cite{Kocaoglu2017} is identifiability for zero order R\'{e}nyi entropy (i.e., support of distribution): If the cardinality of the exogenous variable is small in the true direction, then there does not exist any causal model where the cardinality of the exogenous variable in the reverse direction is also small, under mild assumptions. We conjecture that a similar identifiability result is true for R\'{e}nyi entropy of order 1, i.e., Shannon entropy, and numerical simulations seem to verify it. 
Furthermore, we showed that the corresponding causality test can match or outperform the previous state of the art in causal identification benchmarks in real and synthetic datasets~\cite{Kocaoglu2017}.

In very recent parallel work, Cicalese et al.~\cite{Cicalese2017} proposed a more involved greedy algorithm for the minimum entropy coupling problem and showed a very strong 1-bit approximation guarantee for it. The proposed algorithm only applies for two variables.
Two variable algorithms for minimum entropy coupling
can only be used for entropic causality if one of the two variables takes only two-values. 
Therefore, it would be very interesting if it can be extended for multiple variables,
especially if similar strong approximation guarantees are true. 
 
\section{Background}
\label{sec:Background}
\subsection{Notation}
We use uppercase letters ($X$) for random variables, lowercase letters for their realizations and constants ($x,i,\alpha$), lowercase bold letters for column vectors ($\mat{p}$), uppercase bold letters for matrices and tensors $(\mat{G})$. We represent the set $\{1,2,\hdots,n\}$ by $[n]$, whereas $[a,b]$ indicates the continuous interval from $a$ to $b$ as usual. Vectors and sets with indices are simply represented through subscripts as follows: $[x_i]_{i\in[n]}$ represents the column vector $[x_1,x_2,\hdots,x_n]^T$ and $\{u_i\}_{i\in[m]}$ represents the set $\{u_1,u_2,\hdots,u_m\}$. $X\sim p_X$ means the random variable $X$ is distributed with the probability mass function $p_X$, i.e., $\Pr(X=i) = p_X(i)$. $\ci$ stands for the statistical independence between random variables. The Shannon entropy $H([p_i]_i) = -\sum_{i}p_i\log(p_i)$ naturally extends to matrices (and tensors) as $H([r_{i,j}]_{i,j}) = -\sum_{i,j} r_{i,j}\log(r_{i,j})$, where $\log(.)$ stands for the logarithm base 2.

\subsection{Causal Model}
In this section, we introduce Pearl's causal model for two variables and no unobserved common causes.
Causal models are powerful because they can answer hypothetical questions involving experiments. An experiment, called an intervention in this context, means forcing a set of random variables to take certain values. This operation is captured by the \emph{do(.)} operator of Pearl \cite{Pearl2009}. Thus, by definition, the causal model captures the knowledge of what will happen after performing any intervention on the observed variables. Consider two variables $X,Y$. Suppose $X$ causes $Y$. The following are what this causal model entails: (i) There exists an exogenous (unobserved) random variable $E\ci X$ and a map $f$ such that $Y=f(X,E)$. Let $E\sim p_E,X\sim p_X$. (ii) An intervention $do(X=x)$ changes the data generating model and yields $X=x,E\sim p_E,Y=f(x,E)$. Thus, an intervention on $X$ does not change the distribution of $E$, but fixes the value of $X$. Hence the distribution of $Y$ is affected through these changes. However, an intervention on $Y$ has a different effect. (iii) $do(Y=y)$ changes the model as follows: $X\sim p_X,E\sim p_E, Y=y$. The important thing to notice here is that intervening on $Y$ makes it independent from $X$, whereas intervening on $X$ does not make it independent from $Y$.\footnote{Technically, to talk about statistical independence, we need stochastic interventions: Consider $do(X=U)$ which forces $X$ to take the same values as an independent random variable $U$.}

The fact that a causal model can answer interventional queries is what makes it so powerful, but also hard to learn from data. In general, given a joint distribution over $X,Y$ one can find functions $f,g$ where $Y=f(X,E),E\ci X$ and $X=g(Y,\tilde{E}), \tilde{E}\ci Y$. This makes the problem of learning the causal relation between $X$ and $Y$ unidentifiable in general. The objective of data driven causal inference is to identify the assumptions on either the function $f$ or the variable $E$, under which the causal model can be learned. 

\subsection{The Entropic Causal Inference Framework} 
Entropic causal inference \cite{Kocaoglu2017} uses the number of random bits as a complexity measure and chooses the simpler model as the true causal model. Suppose we observe the joint distribution of two variables $X,Y$ each with $n$ states. Consider the problem of identifying the exogenous variable with minimum Shannon entropy such that there is a causal model where $X$ causes $Y$, that yields this joint distribution. In \cite{Kocaoglu2017}, we established that this problem is equivalent to the minimum entropy coupling problem between $n$ variables each with $n$ states. 

Consider the variables $X,Y$ with $X,Y\in[n]$. Suppose $X$ causes $Y$. Then $Y=f(X,E)$, where $E$ is an exogenous variable of cardinality $m$ for some $m$ independent from $X$, and $f$ is some map $f:[n]\times [m]\rightarrow [n]$. Let $U_i$ be a random variable that has the same distribution as the distribution of $X$ conditioned on $Y=i$: $\Pr(U_i=j)=\Pr(X=j|Y=i)$. We have the following lemma:

\begin{lemma}\cite{Kocaoglu2017}
\label{lem:equivalence}
Let $X,Y$ be two variables with $X,Y\in[n]$. Consider any causal model $X=g(Y,\tilde{E}),\tilde{E}\ci Y$. Then $H(\tilde{E})\geq H^*(U_1,\hdots, U_n)$, where $H^*(U_1,\hdots, U_n)$ is the minimum joint entropy of variables $\{U_1,\hdots,U_n\}$ subject to the constraint that each $U_i$ has the same marginal distribution as the conditional distribution of $X$ given $Y=i$.

\noindent
Moreover, there is an $\tilde{E}\ci Y$ with $H(\tilde{E}) = H^*(U_1,\hdots, U_n)$.
\end{lemma}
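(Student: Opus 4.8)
The plan is to prove the two statements of the lemma separately, tying them together with a single device: converting the scalar exogenous variable $\tilde{E}$ into a coupling of the $n$ conditional distributions through the maps $g(i,\cdot)$.

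For the lower bound, I would start from an arbitrary causal model $X=g(Y,\tilde{E})$ with $\tilde{E}\ci Y$ that reproduces the observed joint distribution of $(X,Y)$. For each $i\in[n]$ define the auxiliary variable $W_i := g(i,\tilde{E})$. Since $\tilde{E}\ci Y$, conditioning on $Y=i$ does not change the law of $\tilde{E}$, so the distribution of $X$ given $Y=i$ equals the law of $g(i,\tilde{E})=W_i$; that is, $W_i$ has exactly the marginal of $U_i$. Hence the tuple $(W_1,\hdots,W_n)$ is a feasible point of the minimum-entropy coupling problem, which gives $H(W_1,\hdots,W_n)\geq H^*(U_1,\hdots,U_n)$. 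On the other hand $(W_1,\hdots,W_n)$ is a single deterministic function of $\tilde{E}$, so by the fact that applying a (possibly many-to-one) map cannot increase entropy, $H(W_1,\hdots,W_n)\leq H(\tilde{E})$. Chaining the two inequalities yields $H(\tilde{E})\geq H^*(U_1,\hdots,U_n)$.

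For achievability I would run the argument in reverse. Let $(U_1^*,\hdots,U_n^*)$ denote an optimal coupling attaining $H^*$ (the minimum exists because entropy is continuous on the compact transportation polytope). Take $\tilde{E}$ to be the tuple $(U_1^*,\hdots,U_n^*)$, drawn independently of $Y$, and define $g(i,e):=e_i$, so that $X=g(Y,\tilde{E})=\tilde{E}_Y$. By construction $H(\tilde{E})=H(U_1^*,\hdots,U_n^*)=H^*$, so it only remains to check that this model reproduces the observed joint law. Using $\tilde{E}\ci Y$ and the marginal constraints of the coupling, $\Pr(X=j,Y=i)=\Pr(Y=i)\Pr(\tilde{E}_i=j)=\Pr(Y=i)\Pr(X=j\mid Y=i)$, which is precisely the given joint distribution; this completes the construction of an $\tilde{E}$ meeting the bound with equality.

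The technical heart is the lower bound, and within it the step that identifies $(W_1,\hdots,W_n)$ as a feasible coupling whose entropy is squeezed between $H^*$ and $H(\tilde{E})$; the two entropy inequalities themselves ($H(f(Z))\leq H(Z)$ and feasibility $\Rightarrow$ entropy $\geq H^*$) are standard. The main thing to be careful about is the implicit requirement that the candidate causal model actually generate the observed joint law of $(X,Y)$, since the marginals of the $U_i$ are defined from that joint; once this is made explicit, the independence $\tilde{E}\ci Y$ is exactly what is needed to match marginals in both directions of the argument.
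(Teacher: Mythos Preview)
The paper does not give its own proof of this lemma; it simply cites the proof of Theorem~3 in the appendix of the referenced prior work. So there is nothing in the present paper to compare your argument against directly.

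That said, your proof is correct and is exactly the natural argument one expects to underlie the cited result. For the lower bound, pushing $\tilde{E}$ forward through the $n$ maps $g(i,\cdot)$ to produce $(W_1,\hdots,W_n)$ is the key move: independence $\tilde{E}\ci Y$ makes each $W_i$ match the conditional law $p_{X\mid Y=i}$, so the tuple is a feasible coupling, and being a deterministic function of $\tilde{E}$ sandwiches its entropy between $H^*$ and $H(\tilde{E})$. For achievability, taking $\tilde{E}$ to be an optimal coupling tuple (which exists by compactness of the transportation polytope and continuity of entropy) and setting $g(i,e)=e_i$ cleanly reproduces the joint and attains $H^*$. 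Your explicit remark that the candidate causal model must reproduce the observed joint of $(X,Y)$ is exactly the point that makes the marginal identification go through, and you handle it correctly in both directions.
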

\begin{proof}
See the proof of Theorem 3 in the appendix of \cite{Kocaoglu2017}.
\end{proof}

Lemma \ref{lem:equivalence} puts the minimum entropy coupling problem at the center of the entropic causal inference framework. If we could solve the minimum entropy coupling problem, we could identify the exogenous variable with minimum entropy. If the identifiability result holds (Conjecture 1 in \cite{Kocaoglu2017}), $H(Y)+H(\tilde{E})$ will be greater than $H(X)+H(E)$ if entropy of $E$ is sufficiently small. Hence, closely approximating the minimum entropy coupling is essential for an effective causal inference algorithm using the entropic causal inference framework. 
\subsection{Greedy Minimum Entropy Coupling Algorithm}
Different from \cite{Kocaoglu2017}, we provide the version of the greedy minimum entropy coupling algorithm that constructs the joint distribution tensor, rather than only the non-zero probability values, which is more instructional for this paper. The greedy algorithm is given in Algorithm \ref{alg:greedy}. The marginal distribution of variable $i$ is shown by the column vector $\mat{p_i}$. Note that in practice, one would only store the non-zero probability values output by the algorithm, rather than creating the extremely sparse tensor $\mat{P}$ with $n^m$ entries.

\begin{algorithm}[ht!]
\begin{small}
    \caption{Joint Entropy Minimization Algorithm}
   \label{alg:greedy}
\begin{algorithmic}[1]
    \State {\bfseries Input:} Marginal distributions of $m$ variables each with $n$ states $\{\mat{p_1},\mat{p_2},...,\mat{p_m}\}$.
    \State Initialize the tensor $\mat{P}(i_1,i_2,\hdots,i_n) = 0, \forall i_j\in [n],\forall j\in[n]$.
    \State Initialize $r=1$.
	\While  {$r>0$} 
	\State $(\{\mat{p_i}\}_{i\in[m]}, r) = \textbf{UpdateRoutine}(\{\mat{p_i}\}_{i\in[m]}, r)$
    \EndWhile
    \State \Return $\mat{P}$.
	\State{\bfseries UpdateRoutine($\{\mat{p_1},\mat{p_2},...,\mat{p_m}\}, r$)}
    \State Find $i_j\coloneqq \argmax_{k}\{\mat{p_j}(k)\},\forall j\in[m]$.
    \State Find $u=\min\{\mat{p_k}(i_j)\}_{k\in[n]}$.
    \State Assign $\mat{P}(i_1,i_2,\hdots,i_n) = u$.
    \State Update $\mat{p_k}(i_j)\leftarrow \mat{p_k}(i_j)-u, \forall k\in[m]$.
    \State Update $r=\sum_{k\in [n]}{\mat{p_1}(k)}$
	\State \Return $\{\mat{p_1},\mat{p_2},...,\mat{p_m}\}, r$
\end{algorithmic}
\end{small}
\end{algorithm}

At each iteration, the algorithm finds the largest probability mass in each marginal, and assigns the minimum of these to the corresponding coordinate in the joint probability tensor. The motivation is that, the large chunks of probability masses are not split into smaller chunks, making as small contribution as possible to the total entropy. The algorithm satisfies at least one marginal constraint at each step, and $m$ of them in the last step. Thus it terminates in at most $nm-m+1$ steps.

\section{Greedy Algorithm Gives Local Optimum}
\label{sec:LocalOptimum}
In this section, we present our main theorem and show that the greedy algorithm always finds a local optimum. We consider $n$ variables each with $n$ states. The extension of the analysis to $m$ variables each with $n$ states is trivial. Let us first formalize the entropy minimization problem:

\begin{definition}[Minimum Entropy Coupling]
Let $U_i, i\in [n]$ be discrete random variables with $n$ states, with marginal distributions $\mat{p_i}\in [0,1]^n$. The minimum entropy coupling problem is to find the joint distribution with minimum entropy that is consistent with the given marginals:
\begin{align}
 &\underset{p(U_1,U_2,\hdots,U_n) }{\min} \hspace{0.2in} H(U_1,U_2,\hdots,U_n) \nonumber\\
 &\emph{s. t.}  \hspace{0.1in} \sum_{j\neq i}{\sum_{u_j\in[n] }p(u_1,u_2,\hdots,u_n)} = \mat{p_i}(u_i), \forall i,u_i.
\end{align}
\end{definition}
\hspace{0.1in}
We can equivalently write down this optimization problem by representing the joint probability value for each configuration as a different variable. This representation has $n^n$ variables and $n^2$ constraints ($n$ marginals and $n$ points for each marginal). Let $x(i_1,i_2,\hdots,i_n)$ be a variable for every $n$-tuple $(i_1,i_2,\hdots,i_n)\in [n]^n$. Notice that the index for $j$th dimension, i.e., $i_j$, captures the realization of variable $U_j$. Then the optimization problem can be written as follows:
\begin{align}
\label{eq:minentropyoptimization}
 & \underset{x }{\text{min}}\hspace{0.16in} 
  \sum_{i_j\in [n], \forall j\in[n]} -x(i_1,i_2, \hdots, i_n)\log{x(i_1,i_2, \hdots, i_n)} \nonumber\\
 & \text{s. t.}\hspace{0.1in} 
   \sum_{i_k\in[n],\forall k\neq l, i_l = j}{x(i_1,i_2, \hdots, i_n)} = p_l(j), \forall j,l\in[n] \nonumber\\
&\hspace{0.3in}   x(i_1,i_2, \hdots, i_n)\geq 0, \forall (i_1,i_2,\hdots,i_n)\in[n]^n \hspace{0.1in}  
\end{align}
In (\ref{eq:minentropyoptimization}), we dropped the constraint $\sum_{j,i_j} x(i_1,i_2 \hdots, i_n)=1$. Total sum is equivalent to first marginalizing out dimensions $1$ to $n-1$, and then marginalizing out dimension $n$. If marginalizing out the first $n-1$ dimensions gives $\mat{p_n}$, which is already captured as a separate equality constraint, summing across this dimension gives 1 since $\mat{p_n}$ sums to 1.

In this section, we show the following theorem:
\begin{theorem}
\label{thm:greedyLocalOpt}
Algorithm \ref{alg:greedy} finds a local optimum point of the optimization problem in (\ref{eq:minentropyoptimization}).
\end{theorem}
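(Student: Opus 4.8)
The plan is to characterize local optimality of (\ref{eq:minentropyoptimization}) through its first-order (KKT) conditions, and then, via a new structural description of the output of Algorithm~\ref{alg:greedy}, to show that this output is an extreme point of the transportation polytope, which I will argue is exactly what local optimality demands here.

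First I would write down the KKT system. The gradient of the objective with respect to the coordinate $x(i_1,\hdots,i_n)$ is $-\log x(i_1,\hdots,i_n)-\log e$, and this coordinate participates in exactly $n$ of the marginal constraints, one per dimension (the constraint indexed by $(l,i_l)$). Introducing a multiplier $\lambda_{l,j}$ for each equality constraint, stationarity on the support (where the nonnegativity multipliers vanish) reads $\log x(i_1,\hdots,i_n)=\sum_{l}\lambda_{l,i_l}-\log e$; that is, the positive entries must take a product form dictated by the dual potentials. The essential feature I would exploit is that the objective is strictly concave and its gradient diverges to $+\infty$ as any coordinate tends to $0$: increasing a currently-zero coordinate strictly increases the entropy. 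Consequently only feasible directions supported on the positive entries can possibly decrease the objective, and along any such direction the entropy is strictly concave. Combining these two facts, a feasible point is a local minimum if and only if there is no nonzero feasible direction supported on its positive entries, equivalently, the columns of the constraint matrix indexed by the support are linearly independent, i.e.\ the point is a vertex of the polytope. At such a vertex the stationarity equations above are automatically solvable for $\lambda$, furnishing the KKT certificate, while every remaining feasible move raises a zero coordinate and hence the entropy.

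The heart of the argument is therefore to prove that Algorithm~\ref{alg:greedy} always returns a vertex, for any number of variables. Here I would order the nonzero entries $\mathbf{i}^{(1)},\hdots,\mathbf{i}^{(T)}$ by the step at which they are created, and record for each step a distinct (dimension, state) pair $\kappa_t=(l_t,j_t)$ whose residual mass is driven to zero at that step (within the loop each step zeroes at least one such pair, and a pair is zeroed only once, so this assignment is injective). I would then examine the square $T\times T$ matrix whose $(t,s)$ entry is $1$ if the configuration $\mathbf{i}^{(s)}$ uses state $j_t$ in dimension $l_t$ and $0$ otherwise. The diagonal entries are $1$ because the entry created at step $t$ is precisely the one that exhausts $\kappa_t$. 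The entries above the diagonal vanish: once $\kappa_t$ is exhausted its residual stays zero, and while the loop runs the $\argmax$ selection in each dimension always picks a state of strictly positive residual, so no later configuration $\mathbf{i}^{(s)}$ with $s>t$ can use the already-zeroed state $j_t$. Hence the matrix is lower triangular with unit diagonal, and thus invertible, which forces the corresponding support columns to be linearly independent; the output is a basic feasible solution, i.e.\ a vertex. By the characterization of the previous paragraph, the output is a local optimum, proving Theorem~\ref{thm:greedyLocalOpt}.

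The step I expect to be the main obstacle is exactly this structural claim for an arbitrary number of variables: the earlier two-variable proof relied on a bipartite (row/column) description of the support and its spanning-forest structure, which does not survive in the tensor setting. The new ingredient is the creation-order triangular argument, and the delicate points to verify carefully are that the killed-pair assignment is genuinely injective, that ties in the $\argmax$ step and the terminal steps (where several pairs may vanish at once) do not break triangularity, and that no positive assignment ever reuses a state of zero residual, the property that makes the above matrix lower triangular.
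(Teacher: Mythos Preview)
Your proposal is correct and takes essentially the same approach as the paper. The key structural observation---each greedy step zeroes at least one (dimension, state) pair that is never selected again, so the induced matrix is triangular with unit diagonal---is exactly the content of Lemma~\ref{lem:helper} and the discussion preceding it; the paper uses the resulting row linear independence to solve the KKT system $\mat{G}\mat{u}=\mat{a}$ for the dual vectors $\mat{u_k}$ (its ``quasi-independence'' characterization) and then invokes concavity to rule out saddles, whereas you recast the same linear independence as ``the output is a vertex of the transportation polytope'' and deduce local minimality directly from the boundary blow-up of the entropy gradient---these are equivalent packagings of the same argument.
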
	
\subsection{KKT Conditions}
First, we characterize the points that satisfy the KKT conditions. We have the following lemma:
\begin{lemma}
\label{lem:KKT}
Consider the optimization problem in (\ref{eq:minentropyoptimization}). Let $x^*(i_1,i_2, \hdots, i_n), i_j\in[n],j\in[n]$ be a point that satisfies the KKT conditions. Then there are $n$ vectors $\mat{u_k},k\in[n]$ each of length $n$ such that either $x^*(i_1,i_2, \hdots, i_n)=0$, or
\begin{equation}
\label{eq:optCharacterize}
 \log{x^*(i_1,i_2, \hdots, i_n)}+1 = \sum\nolimits_{k\in[n]}\mat{u_k}(i_k).
\end{equation}
\end{lemma}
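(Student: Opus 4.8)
The plan is to form the Lagrangian of the problem in (\ref{eq:minentropyoptimization}) and read off the stationarity condition coordinate by coordinate. The decision variables are the tensor entries $x(i_1,\hdots,i_n)$, and the constraints are the $n^2$ marginal equalities, one for each pair $(k,j)$ with $k,j\in[n]$, together with the nonnegativity constraints $x(i_1,\hdots,i_n)\geq 0$. I would attach a multiplier $\mat{u_k}(j)$ to the marginal constraint indexed by $(k,j)$ --- this is exactly where the $n$ vectors $\mat{u_k}$ of the statement come from --- and a multiplier $\mu(i_1,\hdots,i_n)\geq 0$ to each nonnegativity constraint.

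The key computation is the gradient of the Lagrangian with respect to a single entry $x(i_1,\hdots,i_n)$. Differentiating $-x\log x$ contributes $-\log x(i_1,\hdots,i_n)-c$ for a constant $c$ depending only on the logarithm base. The crucial structural observation is how the equality constraints contribute: the constraint indexed by $(k,j)$ sums precisely those entries whose $k$th index equals $j$, so its derivative with respect to $x(i_1,\hdots,i_n)$ equals $1$ when $i_k=j$ and $0$ otherwise. Hence, for each coordinate direction $k$, only the single multiplier $\mat{u_k}(i_k)$ survives, and summing over $k$ produces the additive term $\sum_{k\in[n]}\mat{u_k}(i_k)$. Setting the gradient to zero yields
\begin{equation*}
-\log x(i_1,\hdots,i_n)-c+\sum\nolimits_{k\in[n]}\mat{u_k}(i_k)-\mu(i_1,\hdots,i_n)=0.
\end{equation*}
This separable ``slice'' structure of the transportation constraints is what forces the right-hand side of (\ref{eq:optCharacterize}) to split across the coordinates.

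To finish, I would invoke complementary slackness, $\mu(i_1,\hdots,i_n)\,x^*(i_1,\hdots,i_n)=0$. If $x^*(i_1,\hdots,i_n)=0$, we are in the first alternative of the lemma and claim nothing more. If $x^*(i_1,\hdots,i_n)>0$, then $\mu(i_1,\hdots,i_n)=0$, and the stationarity equation collapses to $\log x^*(i_1,\hdots,i_n)+c=\sum_{k}\mat{u_k}(i_k)$. The stated constant $1$ is then recovered by subtracting $c-1$ from every entry of one multiplier vector, say $\mat{u_1}$; since this shifts $\sum_k\mat{u_k}(i_k)$ by the same amount at every coordinate, the relabelled vectors satisfy (\ref{eq:optCharacterize}) exactly.

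The step needing the most care is the boundary behavior: $-x\log x$ is not differentiable at $x=0$, where its derivative diverges to $+\infty$. This is precisely why the characterization can only assert the clean identity on the support of $x^*$. I would therefore invoke stationarity only at entries with $x^*(i_1,\hdots,i_n)>0$, where the objective is smooth, and fold every zero entry into the first case; making this split explicit, rather than attempting a single gradient equation valid everywhere, is the main subtlety, while the remaining manipulations are routine.
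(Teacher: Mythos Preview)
Your proposal is correct and follows essentially the same route as the paper: form the Lagrangian, differentiate coordinatewise, note that each marginal constraint contributes exactly one multiplier $\mat{u_k}(i_k)$, and use complementary slackness to split into the two cases. Your treatment is in fact slightly more careful than the paper's on two points---you absorb the logarithm-base constant into one of the multiplier vectors rather than tacitly writing the derivative of $-x\log x$ as $-\log x-1$, and you explicitly flag the non-differentiability at $x=0$---but the underlying argument is identical.
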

\begin{proof}
Consider the following general optimization problem:
\begin{equation}
\label{eq:generaloptimization}
\begin{aligned}
 & \underset{x }{\text{min}}
 & &f_0(x)\\
 & \text{s. t.}
 & & h_i(x) = 0, i\in[p] \\
 & &  & f_i(x)\leq 0, i\in [m],
\end{aligned}
\end{equation}
Lagrangian becomes 
\begin{equation}
L(x,\lambda,v) = f_0(x)+\sum_{i=1}^m\lambda_i f_i(x)+\sum_{i=1}^pv_ih_i(x),
\end{equation}
which gives the KKT conditions
\begin{align*}
&f_i(x^*)\leq 0, i\in [m]\\
&h_i(x^*) = 0, i\in [p]\\
&\lambda_i^*\geq 0, i\in[m]\\
&\lambda_i^*f_i(x^*) = 0, i\in [m]\\
&\nabla L(x^*,\lambda^*,v^*) = 0
\end{align*}
This implies, for fixed $i$, either $f_i(x^*) = 0$ or $\lambda_i^*=0$. Matching the constraints in (\ref{eq:minentropyoptimization}) to the functions in (\ref{eq:generaloptimization}), we identify $f(i_1,i_2, \hdots, i_n)$ and $h_{l,j}$ as follows:
\begin{align*}
&f(i_1,i_2, \hdots, i_n) = -x(i_1,i_2, \hdots, i_n), \forall (i_1,i_2, \hdots, i_n)\in[n]^n\\
&h_{l,j}  = \sum_{i_k\in[n] \forall k\neq l, i_l = j}x(i_1,i_2, \hdots, i_n) - p_l(j), \forall l\in[n],j\in[n]
\end{align*}

The Lagrangian of (\ref{eq:minentropyoptimization}) can be written as follows:
\begin{align}
L(x,\lambda, \nu) &= \sum_{\substack{\mathclap{i_j\in [n], \forall j\in[n]}}} -x(i_1,i_2, \hdots, i_n)\log{x(i_1,i_2, \hdots, i_n)}\nonumber\\
&- \sum_{\substack{\mathclap{(i_1,i_2, \hdots, i_n)\in[n]^n}}} \lambda(i_1,i_2, \hdots, i_n)x(i_1,i_2, \hdots, i_n) \nonumber\\
&+ \sum_{j\in [n],l\in[n]}\nu_{l,j}(\sum_{\mathclap{\substack{\hspace{0.3in}i_k\in[n]\forall k\neq l,\\ i_l = j}}}x(i_1,i_2, \hdots, i_n) - p_l(j)), 
\end{align}
for the dual parameters $\lambda(i_1,i_2, \hdots, i_n)$ and $\nu_{l,j}$. The gradient being zero gives us the following:
\begin{align*}
\frac{\partial L}{\partial x(i_1,i_2, \hdots, i_n)} &= -\log{x(i_1,i_2, \hdots, i_n)} - 1  \nonumber\\
&- \lambda(i_1,i_2, \hdots, i_n) + \sum_{l\in[n]}\hspace{0.2in} \nu_{l,i_l}=0\\
\frac{\partial L}{\partial \lambda(i_1,i_2, \hdots, i_n)}& =  \left\{
  \begin{array}{@{}ll@{}}
    0, & \text{if}\ \lambda(i_1, \hdots, i_n)=0 \\
    -x(i_1, \hdots, i_n), & \text{if}\ \lambda(i_1, \hdots, i_n)\neq 0
  \end{array}\right.
\end{align*}

The conditions above imply the following for the optimal point $x^*$: Either $x^*(i_1,i_2, \hdots, i_n)=0$ or if $x^*(i_1,i_2, \hdots, i_n)\neq 0$ it satisfies
\begin{equation}
x^*(i_1,i_2, \hdots, i_n) = 2^{-1+\sum_{k\in[n]}\nu_{k,i_k}}.
\end{equation}
Thus, for $n$ vectors $u_k\coloneqq \nu_{k,.}, k\in[n]$ of length $n$, we have $\log{x^*(i_1,i_2, \hdots, i_n)}+1 = \sum_{k\in[n]}u_k(i_k).$
\end{proof}

By Lemma \ref{lem:KKT}, the optimal point satisfies the following: Each nonzero joint probability can be written as a product of the corresponding entries of $n$ vectors $\{\mat{v_k}\}_{k\in [n]}$ of length $n$. Inspired by the definition of independence, we will term such joint distributions as \emph{quasi-independent}:
\begin{definition}
A joint distribution $p(X_1,X_2,\hdots,X_m)$ for $X_i\in[n]$ is called quasi-independent, if there are $m$ vectors $\mat{u_j},j\in[m]$ such that either $p(i_1,i_2,\hdots,i_m) = 0$ or $p(i_1,i_2,\hdots,i_m) = \prod_{j\in[m]}\mat{u_j}(i_j), \forall i_j\in[n],j\in[m]$.
\end{definition}

\subsection{Characterization of Greedy Algorithm Output}
Consider Algorithm \ref{alg:greedy}. It selects the minimum of maximum probability values across each marginal at each step, subtracts this probability mass from the corresponding coordinates in each marginal and iterates. Next, we show that one can always construct $\mat{u_k}$ vectors that satisfy $\log{x(i_1,i_2, \hdots, i_n)}+1 = \sum_{k\in[n]}\mat{u_k}(i_k)$, where $x(i_1,i_2, \hdots, i_n)$ is the probability mass assigned to point $(i_1,i_2, \hdots, i_n)$ by the algorithm.

Let the algorithm select a probability mass for the point $S_j = (i_1^j,i_2^j, \hdots, i_n^j)$ at iteration $j$. $x(S_j)> 0$. Let $a_j\coloneqq \log{x(S_j)}+1$ after this assignment. Define the column vector $\mat{u} \coloneqq [\mat{u_1}^T,\hdots,\mat{u_n}^T]^T$. $\{\mat{u_i}\}_{i\in[n]}$ are length-$n$ vectors to be decided. We will show that, given the assignments made by the algorithm, one can always construct a $\mat{u}$ such that (\ref{eq:optCharacterize}) holds.

Observe that each iteration of the algorithm corresponds to a linear equation in $\mat{u}$. Note that $\mat{u}$ has length $n^2$ and at iteration $j$, $\mat{u}$ should satisfy the constraint $\mathds{1}_{S_j}^T \mat{u}=a_j$,
where $\mathds{1}_{S_j}^T$ is the indicator vector that is 1 in the columns from $S_j$ and zero otherwise: If $S_j = (i_1^j,i_2^j, \hdots, i_n^j)$, then $\mathds{1}_{S_j}(k)=1, \forall k\in \xi_j$, where $\xi_j= \{i_t^j+(t-1)n\}_{t\in [n]}$. 
We know that the algorithm terminates in at most $n(n-1)+1$ steps. Thus, we have $m<n(n-1)+1$ linear equations and $n^2$ variables. This corresponds to a system of linear equations $\mat{Gu} = \mat{a}$, where $\mat{G}(j,:) = \mathds{1}_{S_j}^T$ and $\mat{a} = [a_j]_{j\in[m]}$ is a column vector.

We have the following key observation: At each iteration step, the algorithm satisfies at least one of the marginal constraints, since it chooses the \textit{minimum} of maximum probabilities. Thus, if at iteration $j$ the algorithm select the set of the coordinates $(i_1^j,i_2^j, \hdots, i_n^j)$, then for some $k\in[n]$ algorithm  never selects the coordinate $i_k^j$ again, since the corresponding marginal constraint is already satisfied. In terms of the matrix $\mat{G}$, this translates to the following statement: \textit{Every row $j$ of $\mat{G}$ contains a column $k\in \xi_j$ where $\mat{G}(l,k) = 0, \forall l>j$}. Thus, every row of $\mat{G}$ has a column where that row contains the last 1 in that column. We have the following lemma:
\begin{lemma}
\label{lem:helper}
Let $\mat{G}$ be a ${0,1}$ matrix where no row is identically zero. If for every row $j$, of all the columns with value 1, there exists a column $k$ such that $\mat{G}(l,k) = 0,\forall l>j$, then the rows of $\mat{G}$ are linearly independent.
\end{lemma}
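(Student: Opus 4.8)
The plan is to establish linear independence directly, by showing that the only vanishing linear combination of the rows of $\mat{G}$ is the trivial one. Suppose $\sum_{j} c_j \mat{G}(j,:) = \mathbf{0}$ for scalars $c_j$; the goal is to force every $c_j = 0$. The structural engine of the argument is that the hypothesis assigns to each row a private \emph{witness} column: for each row $j$, invoke the assumption to pick a column $k_j$ with $\mat{G}(j,k_j) = 1$ and $\mat{G}(l,k_j) = 0$ for all $l > j$, i.e.\ row $j$ holds the last $1$ in column $k_j$. Incidentally these witness columns are distinct, since $k_j = k_{j'}$ with $j < j'$ would force $\mat{G}(j',k_j)=0$ (row $j$ owns the column) while $\mat{G}(j',k_{j'})=1$; this distinctness is not logically needed for what follows but explains why the rows are genuinely independent rather than merely nonzero.

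First I would read off the vanishing combination at the witness columns, sweeping the rows from the smallest index to the largest. Evaluating coordinate $k_j$ of $\sum_l c_l \mat{G}(l,:) = \mathbf{0}$ gives $\sum_{l} c_l \mat{G}(l,k_j) = 0$. Here every row $l > j$ contributes $0$ because $\mat{G}(l,k_j)=0$, and row $j$ contributes exactly $c_j$ since $\mat{G}(j,k_j)=1$, so the identity collapses to $c_j + \sum_{l<j} c_l \mat{G}(l,k_j) = 0$. An induction on $j$ then closes the argument: for the first row the sum over $l<j$ is empty, giving $c_1 = 0$, and once $c_l = 0$ is known for all $l < j$, the same identity yields $c_j = 0$. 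Hence all coefficients vanish and the rows are linearly independent.

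Equivalently, and perhaps more transparently, one can restrict $\mat{G}$ to its witness columns to obtain a square submatrix $M$ that is triangular with an all-ones diagonal—ones on the diagonal coming from $\mat{G}(j,k_j)=1$, and zeros below it coming from the owning condition $\mat{G}(l,k_j)=0$ for $l>j$—so that $\det M = 1$ and $M$ is invertible, forcing its rows, and therefore the rows of $\mat{G}$ they project from, to be independent. I do not expect a genuine obstacle: once the witness columns are identified the triangular structure is immediate and the elimination is routine. The only point needing care is the direction of the induction, namely sweeping from the top row downward, so that at each step the later-row terms are already annihilated by $\mat{G}(l,k_j)=0$ and the earlier-row terms by the inductive hypothesis, leaving $c_j$ isolated.
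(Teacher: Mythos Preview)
Your proof is correct and uses essentially the same idea as the paper: both exploit the witness column $k_j$ at which row $j$ holds the last $1$ to isolate the corresponding coefficient. The paper phrases it as a contradiction---take a nontrivial combination, pick the row of smallest index in its support, and read off its witness column---whereas you run the same elimination as a forward induction (and also give the equivalent triangular-submatrix view); the underlying mechanism is identical.
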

\begin{proof}
Assume otherwise. Then there exists a set of rows $S$ and coefficients $\alpha_j>0$ such that $\sum_{j\in S}\alpha_j\mat{G}(j,:)=0$. Let $l = \min\{i:i\in S\}$. By definition, $l^{th}$ row of $\mat{G}$ has a column $k$ with $\mat{G}(t,k)=0,\forall
t>l$. Thus, this column cannot be made 0 using a linear combination of rows with a larger index, which contradicts with $\sum_{j\in S}\alpha_j\mat{G}(j,:)=0$.
\end{proof}

By Lemma (\ref{lem:helper}), the rows of $\mat{G}$ are linearly independent.  This is also true for the augmented matrix of the system $\mat{Gu} = \mat{a}$. Hence, the assignments are consistent and there is at least one solution to the linear system $\mat{Gu} = \mat{a}$.

\begin{proof}[Proof of Theorem \ref{thm:greedyLocalOpt}]
Consider the joint distribution output by the greedy algorithm. From the above discussion, the assignments to the joint distribution by the greedy entropy minimization algorithm can always be used to create $n$ vectors, such that the points where the joint is non-zero can be written as the product of the corresponding coordinates of these $n$ vectors. Thus, the greedy algorithm outputs a point which is quasi-independent, and satisfies the KKT conditions of the minimum entropy coupling problem. Hence, this is a stationary point. Since entropy is a concave function, there are no saddle points. Thus, greedy algorithm outputs a local optimum.
\end{proof}

\section{Approximation Guarantee}
\label{sec:Approximation}
In this section, we analyze a variant of the greedy algorithm, Algorithm \ref{alg:alternative}, which is easier to develop an approximation guarantee for.
\begin{algorithm}[ht!]
\begin{small}
    \caption{Joint Entropy Minimization - Alternative}
   \label{alg:alternative}
\begin{algorithmic}[1]
    \State {\bfseries Input:} Marginal distributions $\{\mat{p_1},\mat{p_2},...,\mat{p_m}\}$.
    \State Initialize the tensor $\mat{P}(i_1,i_2,\hdots,i_m) = 0, \forall i_j\in [n],\forall j\in[m]$.
    \State Initialize empty sets $S_j=\emptyset, \forall j\in[m]$.
    \State \textbf{Phase I}
    \For{$1\leq t\leq n$}
    \State Find $i_j^t\coloneqq \argmax_{k\in [n]\backslash S_j}\{\mat{p_j}(k)\},\forall j\in[n]$.
    \State Find $p_{min}(t)=\min\{\mat{p_k}(i_j^t)\}_{k\in[n]}$.
    \State Assign $\mat{P}(i_1^t,i_2^t,\hdots,i_n^t) = p_{min}(t)$.
    \State Update $\mat{p_k}(i_j^t)\leftarrow \mat{p_k}(i_j^t)-p_{min}(t)$.
	\State Update $S_j\leftarrow S_j \cup \{i_j^t\}$.
	\EndFor
	\State Initialize $r=\sum_{k\in [n]}{\mat{p_1}(k)}$
	\State \textbf{Phase II}	
	\While  {$r>0$} 
	\State $(\{\mat{p_i}\}_{i\in[m]}, r) = \textbf{UpdateRoutine}(\{\mat{p_i}\}_{i\in[m]}, r)$
    \EndWhile
    \State \Return $\mat{P}$.
\end{algorithmic}
\end{small}
\end{algorithm}
Different from Algorithm \ref{alg:greedy}, Algorithm \ref{alg:alternative} looks at each value of every given marginal exactly once during Phase I. This allows us to relate the entropy contribution of Phase I to a lower bound to the optimum entropy.

Consider two random variables $X_1,X_2$. We use $\mu_1,\mu_2$ to represent the marginal distributions of $X_1$ and $X_2$ after sorting their probabilities in decreasing order. We can extend the entropy function to operate on vectors which do not necessarily sum to 1. To make the distinction from entropy, we use $h(.)$ for this operator\footnote{$h(.)$ is often used for the differential entropy operator. Since we do not use differential entropy in this paper, we believe this is not a source of confusion.}.
\begin{theorem}
\label{thm:approx}
Let $X_1,X_2$ be two discrete random variables with $n$ states and $\mu_1=[p_1(i)]_{i\in[n]}$, $\mu_2=[p_2(i)]_{i\in[n]}$ be their marginal distribution vectors sorted in decreasing order. Let $p_m(i) = \min\{p_1(i),p_2(i)\}$. Let $U$ be the joint distribution output by the greedy algorithm, and $H^*(X_1,X_2)$ the minimum joint entropy of all joints that respect the marginals. Then
\begin{equation*}
H(U)\leq H^*(X_1,X_2)+1-T\log(1/T) + \min\{h(l_1),h(l_2)\},
\end{equation*}
where $l_j = [p_j(i)-p_m(i)]_{i\in[n]}$ for $j\in \{1,2\}$, and $T = 0.5\sum_{i\in [n]}\lvert p_1(i)-p_2(i) \rvert$ is the total variation distance between the sorted marginals of $X_1$ and $X_2$.
\end{theorem}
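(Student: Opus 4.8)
The plan is to exploit the two-phase structure of Algorithm~\ref{alg:alternative} on \emph{sorted} marginals to split $H(U)$ into a ``diagonal'' part and a ``residual'' part, and then to sandwich each part between elementary entropy inequalities and the marginal lower bound $H^*\ge\max\{H(\mu_1),H(\mu_2)\}$.

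First I would characterize the output. Because both $\mu_1,\mu_2$ are sorted in decreasing order and Phase~I visits each coordinate exactly once (via the sets $S_j$), at step $t$ the running maximum of each residual marginal is its $t$-th entry, which is still the original value since only indices $<t$ have been touched. Hence Phase~I assigns $\mat{P}(t,t)=\min\{p_1(t),p_2(t)\}=p_m(t)$; i.e. it builds the diagonal coupling with masses $p_m$. After Phase~I the residual marginals are exactly $l_1$ and $l_2$, which have disjoint supports (at each $i$ one of them is $0$) and each sum to $T$. Phase~II then couples $l_1$ and $l_2$, placing mass only on off-diagonal cells, which are disjoint from the Phase~I diagonal cells. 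Entropy is therefore additive over these two groups of cells: $H(U)=h(p_m)+H_{\mathrm{II}}$, where $H_{\mathrm{II}}$ denotes the entropy contributed by the Phase~II coupling.

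Next I would bound the two pieces. For $H_{\mathrm{II}}$, subadditivity of joint entropy applied to the Phase~II coupling (which has total mass $T$) gives $H_{\mathrm{II}}\le h(l_1)+h(l_2)-T\log(1/T)$; this holds for \emph{any} coupling of $l_1,l_2$, hence for the greedy one, and I would verify it by normalizing the coupling by $T$ and using $H(Q)\le H(\text{row})+H(\text{col})$ together with $T\,H(l_j/T)=h(l_j)-T\log(1/T)$. For the diagonal piece I would prove the grouping inequality $h(p_m)+h(l_j)\le H(\mu_j)+1$ for each $j\in\{1,2\}$: since $p_j(i)=p_m(i)+l_j(i)$ pointwise, $\mu_j$ is obtained from the ``split'' distribution with atoms $\{p_m(i)\}\cup\{l_j(i)\}$ (entropy $h(p_m)+h(l_j)$) by merging the two atoms at each coordinate, and the total entropy lost in merging is $\sum_i p_j(i)\,H_b\!\big(p_m(i)/p_j(i)\big)\le\sum_i p_j(i)=1$, where $H_b$ is the binary entropy function. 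Combining, $H(U)\le\big(h(p_m)+h(l_2)\big)+h(l_1)-T\log(1/T)\le H(\mu_2)+1-T\log(1/T)+h(l_1)$, and symmetrically $H(U)\le H(\mu_1)+1-T\log(1/T)+h(l_2)$. Since every coupling has entropy at least that of either marginal, $H^*\ge H(\mu_1)$ and $H^*\ge H(\mu_2)$, so both right-hand sides are at most $H^*+1-T\log(1/T)+h(l_j)$ for the respective $j$; taking the smaller yields the claimed $\min\{h(l_1),h(l_2)\}$.

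I expect the main obstacle to be the two structural lemmas rather than the final arithmetic. Establishing that Phase~I produces precisely the diagonal $p_m$ (and hence the exact residuals $l_1,l_2$) relies essentially on the decreasing-sorted assumption together with the once-per-coordinate bookkeeping of $S_j$; without sorting the clean diagonal characterization fails. The genuinely load-bearing step is the $1$-bit grouping inequality $h(p_m)+h(l_j)\le H(\mu_j)+1$: this is exactly where the additive constant $1$ in the theorem originates, and its proof hinges on recognizing the merge as atom-combining and controlling the aggregate loss by the trivial-looking but crucial bound $\sum_i p_j(i)\,H_b(\cdot)\le 1$. The appearance of $\min\{h(l_1),h(l_2)\}$ then comes for free from the symmetry of the two pairings, and the $-T\log(1/T)$ term is produced by the normalization in the subadditivity step.
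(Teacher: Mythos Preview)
Your proposal is correct and follows essentially the same route as the paper: the same Phase~I/Phase~II split $H(U)=h(p_m)+H_{\mathrm{II}}$, the same subadditivity/normalization bound $H_{\mathrm{II}}\le h(l_1)+h(l_2)+T\log T$ (the paper packages this as a separate lemma), the same ``split each $p_j(i)$ into at most two atoms'' inequality $h(p_m)+h(l_j)\le H(\mu_j)+1$, and the same use of $H^*\ge\max\{H(\mu_1),H(\mu_2)\}$ to select the better of the two symmetric bounds. Your treatment is in fact a bit more careful than the paper's in two places---you justify explicitly why Phase~I on sorted marginals yields the diagonal $p_m$ and why the Phase~II cells are disjoint from the diagonal (so the entropy contributions add), and you spell out the grouping inequality via $\sum_i p_j(i)H_b(\cdot)\le 1$ rather than just asserting the $+1$ from a two-way split.
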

\begin{proof}
Define $p_{m}(i) = \min\{p_1(i),p_2(i)\}$. In Phase I, algorithm chooses $p_{m}(i)$ for $i\in[n]$. Consider
\begin{align}
H_a &= H(p_{m}(1), p_1(1)-p_{m}(1),p_{m}(2), p_1(2)-p_{m}(2),\nonumber\\
&\hdots,p_{m}(n), p_1(n)-p_{m}(n)).
\end{align} 
$H_a$ is the entropy of the distribution which is obtained by splitting $p_1(i)$ into $p_{m}(i)$ and $p_1(i)-p_{m}(i)$. Since each probability value is divided into at most 2 probability values,
\begin{equation}
\label{eq:ha}
H_a\leq H(X_1)+1.
\end{equation}
Similarly, we can write
\begin{align}
\label{eq:hb}
H_b&\coloneqq H(p_{m}(1), p_2(1)-p_{m}(1),p_{m}(2), p_2(2)-p_{m}(2),\nonumber\\
&\hdots,p_{m}(n), p_2(n)-p_{m}(n)) \leq H(X_2) +1.
\end{align}

Then in Phase I, algorithm creates an entropy contribution $H_{Ph1} = h(p_m(1),p_m(2),\hdots,p_m(n)) = -\sum_{i\in[n]}p_m(i)\log(p_m(i))$.  Based on the definitions of $l_1,l_2$
\begin{equation}
H_a = H_{Ph1} + h(l_1), \hspace{0.3in} H_b = H_{Ph1} + h(l_2).
\end{equation}

Let $\alpha\in\{0,1\}$. Combining with (\ref{eq:ha}) and (\ref{eq:hb}), we get
\begin{equation*}
H_{Ph1} + \alpha h(l_1)+(1-\alpha)h(l_2)\leq \alpha H(X_1)+(1-\alpha)H(X_2)+1.
\end{equation*}

To bound the contribution of the second phase, we use an "independence" bound. The following lemma is useful:
\begin{lemma}
\label{lem:jointNonDistribution}
Consider the vectors $\mat{p}=[p_i]_{i\in [n]}, \mat{q}=[q_i]_{i\in [n]}$ where $p_i,q_i\geq 0$ and $\sum_ip_i=\sum_iq_i=T$. Let $h(\mat{p}) = -\sum_ip_i\log(p_i)$. Let $\mat{R}(i,j)=r_{i,j}$ for $i\in[n],j\in [n]$ be a matrix with row sum equal to $\mat{p}$ and column sum equal to $\mat{q}$, i.e., $\sum_{j\in[n]}r_{i,j}=p_i$ and $\sum_{i\in[n]}r_{i,j}=q_j,\forall i,j\in [n]$. Then $h(\mat{R})\leq T\log(T)+h(\mat{p})+h(\mat{q})$. 

Moreover, when $\mat{R}$ is the outer product of $\mat{p}/\sqrt{T}$ and $\mat{q}/\sqrt{T}$, the equality holds.
\end{lemma}
\begin{proof}
Define the random variables $U$ and $V$ as the variables with marginal distributions $\mat{p}/T$ and $\mat{q}/T$, respectively. Let $\mat{S}(i,j) = [s_{i,j}]_{i,j\in[n]}$ be the joint distribution matrix for $U,V$ that respects the marginals $\mat{p}/T$ and $\mat{q}/T$. Since $H(U,V)\leq H(U)+H(V)$, we have	
\begin{align*}
-\sum_{i,j\in[n]}s_{i,j}\log(s_{i,j}) &\leq -\sum_{i\in[n]}\Big(\frac{p_i}{T}\Big)\log\Big(\frac{p_i}{T}\Big)-\sum_{j\in[n]}\Big(\frac{q_j}{T}\Big)\log{\Big(\frac{q_j}{T}\Big)}\\
&= \frac{1}{T}\left(-\sum_ip_i\log(p_i)+\sum_ip_i\log(T)-\sum_jq_j\log(q_j)+\sum_jq_j\log(T) \right)\\
&= \frac{1}{T}\Big(h(\mat{p}) + h(\mat{q})+2T\log(T)\Big)
\end{align*}
Define $\mat{R}(i,j) = r_{i,j}$ where $r_{i,j} = Ts_{i,j}$. Notice that row sum of $\mat{R}$ is $\mat{p}$ and column sum of $\mat{R}$ is $\mat{q}$. Then we have,
\begin{align*}
h(\mat{R}) &= -\sum_{i,j}r_{i,j}\log(r_{i,j}) = -\sum_{i,j}Ts_{i,j}\log(Ts_{i,j})\\
& = T\left(-\sum_{i,j}s_{i,j}\log(s_{i,j})\right)-T\log(T)\\
& = T\left(\frac{1}{T}\Big(h(\mat{p}) + h(\mat{q})+2T\log(T)\Big)\right) - T\log(T)\\
& = h(\mat{p}) + h(\mat{q}) + T\log(T)
\end{align*}
Suppose $\mat{R}(i,j) = \frac{p_iq_j}{T}$. Then we have,
\begin{align*}
h(\mat{R} ) &= -\sum_{i,j}\Big(\frac{p_iq_j}{T}\Big)\log\Big(\frac{p_iq_j}{T}\Big)\\
& = \frac{1}{T}\Big(-\sum_{i,j}p_iq_j\log(p_i)-\sum_{i,j}p_iq_j\log(q_j)+T^2\log(T)\Big)\\
& = \frac{1}{T}\Big( Th(\mat{p})+Th(\mat{q})+T^2\log(T)\Big) =h(\mat{p}) + h(\mat{q})+T\log(T). 
\end{align*}
\end{proof}

Following Lemma \ref{lem:jointNonDistribution}, the maximum contribution of the second phase to the entropy is obtained when we place the scaled outer product of the remaining probability values on the joint probability matrix. The remaining probabilities after phase 1 are $l_1$ and $l_2$ for $X_1$ and $X_2$. The remaining probability mass is the total variation distance, i.e., $\sum_il_1(i)=\sum_il_2(i)=T$. Thus, in Phase II, $l_1$ and $l_2$ contributes the entropy of  $H_{Ph2}\leq T\log{T} + h(l_1) + h(l_2)$. Finally, we can write
\begin{align}
H(U) &= H_{Ph1}+H_{Ph2} \leq  H_{Ph1}+T\log{T} + h(l_1) + h(l_2)\nonumber\\
&\leq \alpha H(X_1)+(1-\alpha)H(X_2)+1+T\log{T}+(1-\alpha)h(l_1)+\alpha h(l_2)\nonumber\\
&\leq H^*(X_1,X_2)+1-T\log(1/T)+\min\{h(l_1),h(l_2)\}.\label{eq:avgBound}
\end{align}
(\ref{eq:avgBound}) is obtained by selecting $\alpha=1$ if $h(l_1)>h(l_2)$ and $\alpha=0$ if $h(l_1)<h(l_2)$, and through the bound $H^*(X_1,X_2)\geq \max(H(X_1),H(X_2))\geq \alpha H(X_1)+(1-\alpha)H(X_2)$. 
\end{proof}

Consider the bound given in Theorem \ref{thm:approx}. $1-T\log(1/T)$ is a constant less than 1. However, the term $\min\{h(l_1),h(l_2)\}$ can scale with $\log(n)$ depending on the difference between the sorted marginals. In Section \ref{sec:example}, we give an example where $\min\{h(l_1),h(l_2)\}=\mathcal{O}(\log(n))$. Interestingly, for the same example we can show that the greedy algorithm output is at most 1 bit away from the global optimum. Thus, it may be possible to identify a tighter bound.

We can extend the analysis to the case of $m$ variables instead of only 2. We then have the following theorem:
\begin{theorem}
Let $\{X_i\}_{i\in[m]}$ be $m$ random variables each with $n$ states and $\mu_i=[p_i(j)]_{j\in[n]}, \forall i\in[m]$ be their marginal distribution vectors sorted in decreasing order. Let $p_{min}(j) = \min\{p_i(j),i\in[m]\}$. Let $U$ be the joint distribution output by Algorithm \ref{alg:alternative} and $H^*(X_1,\hdots,X_m)$ the global optimum. Then
\begin{align}
H(U)&\leq H^*(X_1,X_2,\hdots,X_m)+1-(m-1)T\log(1/T) \nonumber\\
&+ \sum_ih(l_i) - \max_i\{h(l_i)\},
\end{align}
where $l_i = [p_i(j)-p_{min}(j)]_{j\in[n]}$ for $i\in [m]$, and $T = \sum_{i\in [n]} (p_1(i)-p_{min}(i))$ .
\end{theorem}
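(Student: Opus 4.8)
The plan is to mirror the two-variable argument of Theorem~\ref{thm:approx}, separating the entropy contribution of the two phases of Algorithm~\ref{alg:alternative} and bounding each against the global optimum. First I would analyze Phase~I exactly as before: the algorithm assigns $p_{min}(j)=\min_i\{p_i(j)\}$ to one coordinate for each $j\in[n]$, producing the contribution $H_{Ph1}=h(p_{min}(1),\hdots,p_{min}(n))$. For each variable $i$, splitting every mass $p_i(j)$ into the two pieces $p_{min}(j)$ and $l_i(j)=p_i(j)-p_{min}(j)$ gives a distribution of entropy $H_{Ph1}+h(l_i)$; since each original probability value is broken into at most two parts, this quantity is at most $H(X_i)+1$. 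Hence $H_{Ph1}\leq H(X_i)+1-h(l_i)$ for every $i$. A routine check shows $\sum_j l_i(j)=1-\sum_j p_{min}(j)=T$ for all $i$, so all residual marginals carry the same total mass $T$.

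Next I would establish the $m$-variable generalization of the independence bound in Lemma~\ref{lem:jointNonDistribution}. Given $m$ nonnegative vectors $\mat{p_1},\hdots,\mat{p_m}$ each summing to $T$ and any joint tensor $\mat{R}$ respecting them as marginals, I would normalize to distributions $\mat{p_i}/T$, invoke subadditivity of joint Shannon entropy $H(U_1,\hdots,U_m)\leq\sum_i H(U_i)$, and rescale by $T$ exactly as in the two-variable proof. This yields $h(\mat{R})\leq\sum_i h(\mat{p_i})+(m-1)T\log(T)$, with equality for the scaled outer product. Applying this to the residual vectors $l_1,\hdots,l_m$ bounds the Phase~II contribution by $H_{Ph2}\leq (m-1)T\log(T)+\sum_i h(l_i)$, since Phase~II redistributes precisely these residuals.

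Finally I would combine the two bounds. Writing $H(U)=H_{Ph1}+H_{Ph2}$ and charging the Phase~I estimate against the index $i^\star=\argmax_i h(l_i)$ gives
\begin{equation*}
H(U)\leq H(X_{i^\star})+1+(m-1)T\log(T)+\sum_i h(l_i)-\max_i h(l_i).
\end{equation*}
Using $H^*(X_1,\hdots,X_m)\geq\max_i H(X_i)\geq H(X_{i^\star})$ and rewriting $(m-1)T\log(T)=-(m-1)T\log(1/T)$ yields the claimed inequality. The selection of $i^\star$ to maximize $h(l_{i^\star})$ is the analog of the $\alpha\in\{0,1\}$ choice in the two-variable case, now a single index among $m$ rather than a binary weighting.

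The main obstacle is the generalized independence bound and, more subtly, making sure the index-selection step is tight: one must subtract exactly the largest residual entropy $\max_i h(l_i)$ so that the leftover $\sum_{i\neq i^\star}h(l_i)$ is as small as possible while still dominating the true Phase~II cost. The Phase~I splitting argument and the common-mass identity $\sum_j l_i(j)=T$ generalize without difficulty, so the real content is confirming that subadditivity alone suffices to control the coupling of $m$ residual marginals and that no cross terms are lost in the rescaling.
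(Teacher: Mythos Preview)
Your proposal is correct and follows essentially the same route as the paper: the paper likewise splits $H(U)=H_{Ph1}+H_{Ph2}$, bounds $H_{Ph1}+h(l_j)\le H(X_j)+1$ via the at-most-two-pieces splitting, proves the $m$-variable independence bound $h(\mat{R})\le\sum_i h(\mat{p_i})+(m-1)T\log T$ by normalizing and using subadditivity, and then selects the single index $j=\arg\max_k h(l_k)$ (phrased there via weights $\alpha_j\in[0,1]$ with $\sum_j\alpha_j=1$ and $\alpha_{j^\star}=1$) before invoking $H^*\ge\max_j H(X_j)$. Your direct choice of $i^\star$ is exactly this $\alpha$-selection written without the convex-combination detour.
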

\begin{proof}
Define $p_{min}(i) = \min_j\{p_j(i), j\in [m]\}$. In Phase 1, the algorithm chooses $p_{min}(i)$ for $i\in[n]$. Consider for all $j\in[m]$
\begin{align}
H_{a_j} &= H(p_{min}(1), p_j(1)-p_{min}(1),p_{min}(2), p_j(2)-p_{min}(2),\nonumber\\
&\hdots,p_{min}(n), p_j(n)-p_{min}(n)).
\end{align} 
$H_{a_j}$ is the entropy of the distribution which is obtained by splitting $p_j(i)$ into $p_{min}(i)$ and $p_j(i)-p_{min}(i)$. Since each probability value is divided into at most 2 probability values,
\begin{equation}
\label{eq:haMulti}
H_{a_j}\leq H(X_j)+1.
\end{equation}
In Phase I, algorithm creates an entropy contribution $H_{Ph1} = h(p_{min}(1),p_{min}(2),\hdots,p_{min}(n)) = -\sum_{i\in[n]}p_{min}(i)\log(p_{min}(i))$. Define $l_j = [p_j(1)-p_{min}(1),p_j(2)-p_{min}(2), \hdots, p_j(n)-p_{min}(n)]$ for all $j\in[m]$. Then we have
\begin{equation}
H_{a_j} = H_{Ph1} + h(l_j), \forall j\in[m].
\end{equation}
Let $\alpha_j\in[0,1]$ and $\sum_j\alpha_j=1$. Combining with (\ref{eq:haMulti}), we get
\begin{equation*}
H_{Ph1} + \sum_j\alpha_j h(l_j)\leq \sum_j\alpha_j H(X_j)+1.
\end{equation*}

To bound the contribution of the second phase, we use an "independence" bound similar to the one in the proof of Theorem \ref{thm:approx}. We need the following lemma:
\begin{lemma}
\label{lem:jointNonDistributionMultiple}
Consider the vectors $\mat{p_i}=[p_i(j)]_{j\in [n]}, {i\in [m]}$ where $p_i(j)\geq 0,\forall i\in[m],j\in[n]$ and $\sum_jp_i(j)=T,\forall i\in[m]$. Let $h(\mat{p_i}) = -\sum_jp_i(j)\log(p_i(j))$. Let $\mat{R}=[r_{i_1,i_2,\hdots,i_m}]_{i_j\in[n]}$ be a tensor that satisfies the following: $\sum_{i_k\in[n], \forall k\neq l,i_l=t}{r_{i_1,i_2,\hdots,i_m}}=p_l(t), \forall l\in[m],t\in[n]$.
Then $h(\mat{R}) \leq \sum_{i\in[m]}h(\mat{p_i}) + (m-1)T\log(T)$.

Moreover, when $\mat{R}$ is the outer product of $\frac{\mat{p_i}}{T^{\frac{m-1}{m}}}$,for all $i\in[m]$, the equality holds.
\end{lemma}
\begin{proof}
Define the random variables $U_i$ as the variables with marginal distributions $\mat{p_i}/T$ for all $i\in[n]$. Let $\mat{S}(i_1,i_2,\hdots,i_m) = [s_{i_1,i_2,\hdots,i_m}]_{i_j\in[n],\forall j\in[m]}$ be the joint distribution tensor for $U_i$ that respects the marginals $\mat{p_i}/T$ and $\mat{q_i}/T$. Since $H(U_1,U_2,\hdots,U_m)\leq \sum_iH(U_i)$, we have	
\begin{align*}
-\sum_{i_j\in[n],j\in[m]}s_{i_1,i_2,\hdots,i_m}\log(s_{i_1,i_2,\hdots,i_m}) &\leq -\sum_{i,j\in[n]}\Big(\frac{\mat{p_i}(j)}{T}\Big)\log\Big(\frac{\mat{p_i}(j)}{T}\Big)\\
&= \frac{1}{T}\left(\sum_ih(\mat{p_i})+mT\log(T)\right)
\end{align*}
Define $\mat{R}(i_1,i_2,\hdots,i_m) = r_{i_1,i_2,\hdots,i_m}$ where $r_{i_1,i_2,\hdots,i_m} = Ts_{i_1,i_2,\hdots,i_m}$. Notice that with this scaling, marginalizing out every dimension in  $\mat{R}$ except for dimension $i$ gives $\mat{p_i}$ vector. Then we have,
\begin{align*}
h(\mat{R}) &= -\sum_{i_1,i_2,\hdots,i_m}r_{i_1,i_2,\hdots,i_m}\log(r_{i_1,i_2,\hdots,i_m}) = -\sum_{i_1,i_2,\hdots,i_m}Ts_{i_1,i_2,\hdots,i_m}\log(Ts_{i_1,i_2,\hdots,i_m})\\
& = T\left(-\sum_{i_1,i_2,\hdots,i_m}s_{i_1,i_2,\hdots,i_m}\log(s_{i_1,i_2,\hdots,i_m})\right)-T\log(T)\\
& = T\left(\frac{1}{T}\Big(\sum_ih(\mat{p_i})+mT\log(T)\Big)\right) - T\log(T)\\
& = \sum_ih(\mat{p_i})+(m-1)T\log(T)
\end{align*}

Suppose $\mat{R}(i_1,i_2,\hdots,i_m) = \frac{\prod_jp_j(i_j)}{T^{m-1}}$. Then we have,
\begin{align*}
h(\mat{R} ) &= -\sum_{i_1,i_2,\hdots,i_m}\Big(\frac{\prod_jp_j(i_j)}{T^{m-1}}\Big)\log\Big(\frac{\prod_jp_j(i_j)}{T^{m-1}}\Big)\\
& = \frac{1}{T^{m-1}}\Big(T^{m-1}\sum_{i}h(\mat{p_i})+(m-1)T^m\log(T)\Big)=\sum_{i}h(\mat{p_i})+(m-1)T\log(T). 
\end{align*}

\end{proof}

Following Lemma \ref{lem:jointNonDistributionMultiple}, the maximum contribution of the second phase to the entropy is obtained when we place the scaled outer product of the remaining probability values on the joint probability matrix. The remaining probabilities after Phase 1 are $l_i$ for $X_i$ for all $i\in[m]$. The remaining probability mass is $\sum_il_j(i)=T,\forall j\in[m]$. Thus, in Phase II, $l_j,j \in[m]$ contributes the entropy of  $H_{Ph2}\leq \sum_jh(l_j) + (m-1)T\log{T} $. Finally, we can write
\begin{align}
H(U) &= H_{Ph1}+H_{Ph2} \leq  H_{Ph1}+\sum_jh(l_j) + (m-1)T\log{T}\nonumber\\
&\leq \sum_j\alpha_j H(X_j)+1-\sum_j\alpha_j h(l_j)+\sum_jh(l_j) + (m-1)T\log{T}\\
&\leq H^*(X_1,X_2,\hdots,X_m)+1-(m-1)T\log(1/T)+\sum_jh(l_j)-\max_jh(l_j).\label{eq:avgBoundMulti}
\end{align}
(\ref{eq:avgBoundMulti}) is obtained by selecting $\alpha_j=1$ for $j=\arg\max_k\{h(l_k)\}$, and through the bound 
\begin{equation}
H^*(X_1,X_2,\hdots,X_m)\geq \max(H(X_1),H(X_2),\hdots,H(X_m))\geq \sum_{j}\alpha_jH(X_j).
\end{equation} 
\end{proof}

\subsection{A special family of distributions}
\label{sec:example}
Let $X_1$ be uniformly distributed random variable over $n$ states, i.e., $\mu_1(i)=1/n,\forall i\in[n]$. Let $X_2$ have the distribution $\mu_2$ with the following: $\mu_2(i)=\frac{\alpha}{n},\forall i\in [n/2]$ and $\mu_2(i)=\frac{2-\alpha}{n},\forall i\in\{n/2+1,n/2+2,\hdots,n\}$, where $1< \alpha < 2$. One can check that $\mu_2$ sums to 1 with this parameterization. We can calculate the entropies of $X_1$ and $X_2$ which yields $H(X_1)=\log(n), H(X_2)=\log(n)-\frac{\alpha}{2}\log(\alpha)-\frac{2-\alpha}{2}\log(2-\alpha)$. Running Algorithm \ref{alg:alternative} on $X_1$ and $X_2$, we have the following:
\begin{align}
H(U) &= \log(n)-\frac{\alpha-1}{2}\log(\alpha-1)-\frac{2-\alpha}{2}\log(2-\alpha)\nonumber\\
&=H(X_2)+\frac{\alpha}{2}\log(\frac{\alpha}{\alpha-1})+\frac{1}{2}\log(\alpha-1)\nonumber\\
&=H(X_2)+\frac{1}{2}\log(1+\epsilon)+\frac{\epsilon}{2}\log(1+\frac{1}{\epsilon})\label{eq:exampleBound}\\
&\leq H(X_2	)+1\leq H^*(X_1,X_2)+1.
\end{align}
where in (\ref{eq:exampleBound}) we used the reparameterization $\alpha=\epsilon+1$ for $0 < \epsilon < 1$. Since $H(U)\geq H^*(X_1,X_2)$, algorithm outputs a joint distribution with entropy at most 1 bit away from the optimum. However, we have $h(l_1) = h(l_2)= \frac{\alpha-1}{2}\log\frac{n}{\alpha-1}$. Thus, $\min\{h(l_1),h(l_2)\} =\frac{\alpha-1}{2}\log\frac{n}{\alpha-1}$ yielding a gap of at least $ \frac{\alpha-1}{2}\log(n)$. In the light of this example, we believe that a tighter guarantee should be provable for the given algorithm.

\bibliographystyle{plain}
\bibliography{causalinferenceSimple.bib}

\end{document}